\DeclarePairedDelimiter\floor{\lfloor}{\rfloor}
\theoremstyle{amsart}
\newfont{\fnt}{cmsy10}
\newfont{\sss}{cmr10}
\newfont{\azb}{wncyr10}
\newfont{\azbit}{wncyi10}
\theoremstyle{definition}
\theoremstyle{plain}
\newtheorem{vt}{Theorem}
\newtheorem*{yosh}{Yoshida's Conjecture}
\newtheorem{lm}{Lemma}
\newtheorem{ds}{Corollary}
\newtheorem{prp}{Proposition}
\theoremstyle{definition}
\newtheorem{pz}{Remark}
\newcommand{\diagdots}[3][-25]{%
  \rotatebox{#1}{\makebox[0pt]{\makebox[#2]{\xleaders\hbox{$\cdot$\hskip#3}\hfill\kern0pt}}}%
}
\begin{document}
\title[Towards the Proof of Yoshida's Conjecture]{
{\protect\vspace*{-1.6cm}}
Towards the Proof of Yoshida's Conjecture}
\author{Ji\v{r}\'{i} Jahn} \address{Mathematical Institute, Silesian University in Opava, Na Rybn\'{i}\v{c}ku~1, 746~01 Opava, Czech Republic} \email{Jiri.Jahn@math.slu.cz} \author{Ji\v{r}ina Jahnov\'{a}} \address{Mathematical Institute, Silesian University in Opava, Na Rybn\'{i}\v{c}ku~1, 746~01 Opava, Czech Republic} \email{Jirina.Vodova@math.slu.cz, Jirina.Jahnova@math.slu.cz}
\keywords{Hamiltonian dynamical systems, integrability, ordinary differential equations in complex domain, perturbation methods.}
\subjclass[2010]{34A26; 34D10; 34M35; 37J30.}
\begin{abstract} 
\looseness=-1
Yoshida's Conjecture formulated by H. Yoshida in 1989 states that in $\mathbb{C}^{2N}$ equipped with the canonical symplectic form $\mathrm{d}\mathbf{p} \wedge \mathrm{d} \mathbf{q},$ the Hamiltonian flow corresponding to the Hamiltonian \begin{equation*}
H = \frac{1}{2}
\sum_{i=1}^{N} p_{i}^{2} + \sum_{i=0}^{N} (q_i - q_{i+1})^k, \ \quad \text{with\ } q_0 = q_{N+1} = 0, \label{system}
\end{equation*}
where $N\geq 3$ is odd and $k\geq 4$ is even, has no global complex meromorphic first integral functionally independent of $H$. For $N=3$ and $N=5$ with $k\geq 4$ arbitrary even number, the result was proved true by Maciejewski, Przybylska and Yoshida in 2012 by means of differential Galois theory. However, the question whether Yoshida's conjecture is true in general, remained open. In this paper we give a proof that this conjecture is in fact true for infinitely many values of $N$ using the results of R. D. Costin which are based on the so-called poly-Painlev\'{e} method devised by M. Kruskal.
\end{abstract}
\maketitle
\section{Introduction}
General Hamiltonian dynamical systems are traditionally of great interest for both mathematicians and physicists simply because they describe the evolution of many physical systems like a planetary system, electron in an electromagnetic field etc. 
Since the number of independent first integrals determines the dimension of the space that the trajectories fill and therefore provides some information about the possible onset of chaos in the phase space, it is no surprise that one of the basic problems concerning these systems is to decide whether a given system is integrable or not in the sense of Liouville theorem (cf.~ \cite{arnold}), which means to decide whether there are sufficiently many functionally independent first integrals in involution that are of a given class or not. 
In the realm of integrability theory, it is a standard trick of the trade already since the days of S. Kowalevski to consider complexified systems instead of the original ones (see e.g. \cite{maciejewski, yoshida2, morales-ruiz4, morales-ruiz1, morales-ruiz2, morales-ruiz3, yoshida3, yoshida1, ziglin1, ziglin2}) and virtually the same route is taken also in this paper - here, we consider Hamiltonian systems in the complex domain (the phase space is a subspace of $\mathbb{C}^{2N}$ and the time variable is complex).
In the 80's, Ziglin's theory of complex nonintegrability of these complexified systems, in which the necessary conditions of integrability are expressed in terms of certain monodromy matrices of the solutions of the corresponding variational equations along a suitably chosen particular solution of the original Hamiltonian system, started to develop \cite{ziglin1}, \cite{ziglin2}. This theory was further extended by H. Yoshida \cite{yoshida3}, \cite{yoshida1}. 
Building on Ziglin's theory, Morales-Ruiz and Ramis were able to obtain further results using the differential Galois theory that relate the integrability of Hamiltonian systems to the solvability of their variational equations around a particular solution (see \cite{morales-ruiz4}, \cite{morales-ruiz1}, \cite{morales-ruiz2} and \cite{morales-ruiz3}).
The necessary conditions of partial integrability of Hamiltonian systems by means of differential Galois theory have been recently studied by Maciejewski, Przybylska and Yoshida \cite{maciejewski}, \cite{yoshida2}.

Although the theory of (non)integrability is developing, there are still systems that seem not to be amenable to analysis using this theory. One of these systems is the N-degrees-of-freedom Hamiltonian system 
\begin{equation}\label{system1}
\frac{\mathrm{d} \bf{q}}{\mathrm{d}t}=\frac{\partial H}{\partial \mathbf{p}}, \qquad \frac{\mathrm{d} \mathbf{p}}{\mathrm{d}t}=-\frac{\partial H}{\partial \mathbf{q}}, \qquad \mathbf{q}=(q_1,\dots,q_N)\in\mathbb{C}^N, \mathbf{p}=(p_1,\dots,p_N)\in\mathbb{C}^N, 
\end{equation}
whose dynamics being given by the Hamiltonian
\begin{equation}
H = \frac{1}{2}
\sum_{i=1}^{N} p_{i}^{2} + V_{N,k}(\mathbf{q}), \label{system}
\end{equation}
where the potential is the following homogeneous function
\begin{equation}V_{N,k}(\mathbf{q}) = \sum_{i=0}^{N} (q_i - q_{i+1})^k, \ \quad \text{with\ } q_0 = q_{N+1} = 0,\label{potential}
\end{equation}
and the time $t$ is considered to be a complex variable.


To the best of our knowledge, this system was first considered by H. Yoshida in \cite{yoshida1} where it was shown it does not admit a single additional first integral for $k=4$ and $N=3,5$ except the Hamiltonian itself. In this paper Yoshida formulated the following conjecture

\begin{yosh}
For an arbitrary odd $N \geq 3,$ and for arbitrary even $k\geq 4,$ the Hamiltonian system with $N$ degrees of freedom, given by the Hamiltonian (\ref{system}) with the potential (\ref{potential}) 
does not admit an additional global complex meromorphic first integral. 
\end{yosh}

Yoshida himself actually proved that the conjecture is in fact a corollary to another conjecture of him which he also stated in \cite{yoshida1} in the following form:\\

\noindent \textit{For an arbitrary odd $N\geq 3$ and even $k\geq 4$, the numbers $\Delta_1, \dots, \Delta_N$ given by
\begin{equation}\label{yoshidasuff}\Delta_j=\frac{1}{2k}\sqrt{(k-2)^2+8k\lambda_j} \quad 1\leq j\leq N,\end{equation}
where $\lambda_j:=2(k-1)\sin^2((\pi j)/(2(N+1)))$, are independent over $\mathbb{Q}$. 
}\\

However, in 2007 K. Yoshimura proved that this latter conjecture is not true when he found a counterexample to the case $N=5, k=16$ (for details see \cite{yoshimura, yoshida2}), so that the question whether Yoshida's Conjecture is true or not remained open. In 2012, Maciejewski, Przybylska and Yoshida \cite{yoshida2} proved that Yoshida's Conjecture is true for $N=3,5$ and arbitrary even $k\geq 4$ using the results obtained in \cite{yoshida2} that are based on an application of differential Galois theory to variational equations along a particular solution, and they strongly conjectured that Yoshida's Conjecture is true for an arbitrary odd $N\geq 3$ and even $k \geq 4$.  Unfortunately, the approach through differential Galois theory seems to be virtually impossible to be applied to systems that contain parameters, so that the problem of finding the general proof of Yoshida's Conjecture is still open.

The principal aim of this paper is to give a proof of Yoshida's Conjecture for infinitely many values of $N,$ more precisely for those $N \geq 3$ such that $N \equiv 1 \!\! \mod{6}$ or $N \equiv 3 \!\! \mod{6}$ with an arbitrary even $k \geq 4$. To this end we employ results due to R. D. Costin which are of slightly different nature than the results discussed above. They are based on the ideas of Martin D. Kruskal's poly-Painlev\'{e} method \cite{kruskal4}, \cite{kruskal1}, \cite{kruskal2}, \cite{kruskal3} which, although at times somewhat heuristic in spirit, finds its completely satisfactory and rigorous counterpart in the work of Costin. These results first appeared in \cite{costin1} and their generalization was later given in the Ph.D. thesis of Costin (see \cite{CostinThesis}).

The present paper is organized in the following way:
the main ideas and results pertaining to the proof are described in Section \ref{sekce_costin}. To streamline the proof, we briefly recall several basic facts about Chebyshev polynomials in Section \ref{sekce_chebychev}; these will be later used in the proof. The very proof of Yoshida's Conjecture for the special values of $N$ cited above occupies the rest of the paper and is to be found in Section \ref{Main_results}.

From now on, the term ``(non)integrability" stands for ``complex (non)integrability", and ``meromorphic function" means ``complex meromorphic function".

\section{The poly-Painlev\'{e} test: a review of results due to R. Costin}\label{sekce_costin}
In this section, we give a short account of the results related to polynomial homogeneous systems of second-order ordinary differential equations introduced in \cite{CostinThesis} that are a generalization of similar results obtained in \cite{costin1} in the case of the homogeneous H\'{e}non-Heiles system and at the same time they are elaboration of the poly-Painlev\'{e} method devised by M. D. Kruskal. 

Consider the system of second-order differential equations
\begin{equation}\label{rovnice}
\ddot{q}_m = P_m(\mathbf{q}),\qquad m=1,\dots,N, \qquad \mathbf{q}=(q_1,\dots,q_N), \ t\in\mathbb{C}
\end{equation}
with $P_m$ homogeneous polynomials of degree $k-1.$ Particular solutions $q_m(t)$ of (\ref{rovnice}) that are of the form $q_m(t) = \alpha_m \phi(t),$ $m = 1, \ldots, N,$ $\alpha_m \in \mathbb{C},$ satisfy the conditions
\begin{equation}
\ddot{\phi} = \phi^{k-1}, \quad \alpha_m = P_m(\alpha_1, \ldots, \alpha_N) \label{podminka}
\end{equation}
for every $m = 1, \ldots, N.$ Let $\pmb{\alpha} = (\alpha_1, \ldots, \alpha_N)$ be a nonzero solution of (\ref{podminka}) such that $\alpha_N \neq 0.$ If we introduce new variables $u_1, \ldots, u_{N-1}$ and $Q,$ a small parameter $\epsilon$ by the formulas
\begin{eqnarray*}
q_m(t) & = & \frac{\alpha_m}{\alpha_N} Q(t) + \epsilon u_m(t), \quad m = 1, \ldots, N-1, \\
q_N(t) & = & Q(t)
\end{eqnarray*}
and if we Taylor expand the polynomials $P_m,$ then using homogeneity, the system (\ref{rovnice}) takes the form
\begin{eqnarray}
\ddot{\mathbf{u}} &=& Q^{k-2} \alpha_{N}^{-k+2}M_P \mathbf{u} + \epsilon R(\mathbf{u}, Q, \epsilon), \label{transformed1}\\
\ddot{Q}&=&Q^{k-1}\alpha_N^{-k+1}P_N(\alpha)+\varepsilon R_N(\mathbf{u},Q,\varepsilon) \label{transformed2}
\end{eqnarray}
where $\mathbf{u} = (u_1, \ldots, u_{N-1}),$ $R$ is a vector of polynomials, $R_N$ is a polynomial and $M_P$ is a matrix, given by the formula
\begin{equation}
M_{P} = \left(\frac{\partial P_i}{\partial q_j}(\pmb{\alpha}) - \frac{\alpha_i}{\alpha_N} \frac{\partial P_N}{\partial q_j} (\pmb{\alpha}) \right)_{\!\!ij}, \quad i,j = 1, 2, \ldots, N-1. \label{maticaobecne}
\end{equation}
If the matrix $M_P$ has $N-1$ distinct eigenvalues $\lambda_1, \ldots, \lambda_{N-1},$ then the reduced system (i.e.~ the system corresponding to equations (\ref{transformed1}), (\ref{transformed2}) with $\varepsilon = 0$) can be converted, after integrating once the reduced form of equation (\ref{transformed2}) and eliminating time by treating $Q$ as an independent variable and rescaling $Q,$ into the decoupled system of $N-1$ generalized Lam\'{e} equations of the form
\begin{equation}
(x^k - 1) \frac{\mathrm{d}^{2} v_m}{\mathrm{d} x^2} + \frac{k}{2} x^{k-1} \frac{\mathrm{d} v_m}{\mathrm{d} x} - \frac{k}{2} \lambda_m x^{k-2} v_m = 0, \label{lame}
\end{equation}
for $m = 1, \ldots, N-1.$

In \cite{CostinThesis}, it is proved that $r$ functionally independent first integrals $F_1, \ldots, F_r$ of the system (\ref{rovnice}) defined on a domain $\mathcal{D}=D\times D_{q_N} \subset \mathbb{C}^{2N}$ such that 1) the projection $D_{q_N}$ of $\mathcal{D}$ on the $q_N$-coordinate contains closed paths around the roots of the polynomial $q^k - (Ck)/2 a_{N}^{-k+2}$ (as a polynomial in $q$) for some constant $C\in\mathbb{C}$ and 2) all $F_r$'s are meromorphic along the linear manifold $q_m = \frac{\alpha_m}{\alpha_N} q_N,$ $\dot{q}_m = \frac{\alpha_m}{\alpha_N} \dot{q}_N,$ $m = 1, \ldots, N-1,$ give rise to $r$ independent first integrals of the system (\ref{lame}) that are moreover holomorphic on a domain $\Omega$ whose projection on the $x$-coordinate contains closed paths around all the $k$-th roots of unity.

At this point, employing the results concerning such first integrals of the system (\ref{lame}), Costin was able to prove the following theorem:

\begin{vt}[R. Costin, \cite{CostinThesis}] \label{Costin theorem}
Suppose the matrix $M_P$ has $N-1$ distinct eigenvalues $\lambda_1, \ldots, \lambda_{N-1}$. Let $n$ numbers among the numbers
\begin{equation}
\nu_m=\sqrt{\frac{(k-2)^2}{16}+\frac{k}{2}\lambda_m}\qquad m=1,\dots,N-1 \label{nyem}
\end{equation} 
be irrational. Then the system (\ref{rovnice}) has at most $2N-1-2n$ independent first integrals which are meromorphic near the linear manifold $q_m = \frac{\alpha_m}{\alpha_N} q_N,$ $\dot{q}_m = \frac{\alpha_m}{\alpha_N} \dot{q}_N,$ $m = 1, \ldots, N-1.$
\end{vt}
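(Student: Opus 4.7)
The plan is to combine the reduction already recalled before the theorem with a monodromy-theoretic argument for the decoupled Lam\'{e}-type system (\ref{lame}), in the spirit of Kruskal's poly-Painlev\'{e} method.

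By the cited result of Costin, it is enough to bound the number of independent \emph{holomorphic} first integrals, on the domain $\Omega$ whose $x$-projection encloses all $k$-th roots of unity, of the decoupled Lam\'{e} system (\ref{lame}). I would therefore first analyse each individual Lam\'{e} equation at its regular singular points, namely the $k$-th roots of unity and $x=\infty$. A direct indicial-equation calculation yields the exponents $\{0,\,1/2\}$ at every finite root of unity and exponents whose difference equals $2\nu_m$ at infinity (one finds $\rho^{2}+((k-2)/2)\rho-(k/2)\lambda_m = 0$, giving $\rho_{\pm}=-(k-2)/4\pm\nu_m$). Consequently the local monodromy of the $m$-th Lam\'{e} equation about $x=\infty$, written in the basis of its solution eigenvectors, is diagonal with eigenvalue ratio $e^{4\pi i\nu_m}$; as soon as $\nu_m$ is irrational, a single iterate of this element already generates a dense subgroup of $U(1)$.

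Any holomorphic first integral on $\Omega$ must be invariant under analytic continuation along every loop in $\Omega$, and in particular under the full monodromy group of each Lam\'{e} equation. For each index $m$ with $\nu_m$ irrational, I would combine the irrational rotational monodromy at infinity with the involutive monodromies at the individual $k$-th roots of unity (whose eigenbases do not coincide with that of the monodromy at infinity), and argue that the group generated acts with dense orbits on the two-dimensional solution space spanned by $(v_m,v_m')$. Since no nonconstant holomorphic function on $\mathbb{C}^{2}$ can be constant on a dense subset, no holomorphic first integral may depend nontrivially on either of those two coordinates. Each irrational $\nu_m$ therefore removes two of the $2(N-1)$ functional degrees of freedom of the Lam\'{e} system; together with the single first integral carried over from the (reduced) equation for $Q$, this leaves at most $2(N-1)-2n+1=2N-1-2n$ independent first integrals of (\ref{rovnice}) near the chosen linear manifold, which is exactly the stated bound.

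The principal technical obstacle lies in showing that each irrational $\nu_m$ yields a \emph{codimension-two} obstruction rather than merely codimension one: the diagonal monodromy at $\infty$ by itself would still preserve the product $v_m^{+}v_m^{-}$ of its two eigen-solutions, and it is only the nontrivial interaction with the monodromies at the $k$-th roots of unity that breaks this residual invariance and forces both solution-coordinates to disappear from any candidate first integral. Making this density-of-orbits argument rigorous on the specified domain $\Omega$ is precisely the heart of the poly-Painlev\'{e} analysis carried out by Costin in \cite{CostinThesis}, which I would invoke as a black box in the concluding counting step above.
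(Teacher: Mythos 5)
The paper does not actually prove this statement: it is imported verbatim from Costin's thesis \cite{CostinThesis}, and the only material the authors supply themselves is the reduction preceding it (the change of variables, the matrix $M_P$, and the passage to the decoupled generalized Lam\'{e} equations (\ref{lame}) together with the fact that meromorphic first integrals of (\ref{rovnice}) descend to holomorphic first integrals of (\ref{lame}) on $\Omega$). So there is no in-paper argument to compare yours against. Judged on its own, your preparatory analysis is correct: the finite singularities of (\ref{lame}) are the $k$-th roots of unity with exponents $\{0,1/2\}$ (hence involutive local monodromies), the indicial equation at infinity is $\rho^{2}+\tfrac{k-2}{2}\rho-\tfrac{k}{2}\lambda_m=0$ with exponent difference $2\nu_m$, and irrationality of $\nu_m$ makes the powers of $e^{4\pi i \nu_m}$ dense in $U(1)$. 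The final count does land on $2N-1-2n$, though the ``$+1$ carried over from the reduced $Q$-equation'' is asserted rather than derived and sits a little uneasily with the correspondence recalled in Section \ref{sekce_costin}, which sends $r$ independent integrals of (\ref{rovnice}) to $r$ (not $r-1$) independent integrals of (\ref{lame}).

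The genuine gap is the step you yourself flag as the principal technical obstacle and then invoke ``as a black box'': showing that for each irrational $\nu_m$ the group generated by the involutions at the $k$-th roots of unity and the irrational rotation at infinity acts with dense orbits on the $(v_m,v_m')$-plane, so that the obstruction has codimension two. That step \emph{is} the theorem; deferring it to \cite{CostinThesis} makes the argument circular. Everything you establish independently only shows that the monodromy at infinity alone leaves the product $v_m^{+}v_m^{-}$ of its eigen-solutions invariant (its eigenvalues multiply to $e^{-\pi i (k-2)}=1$ for $k$ even), so a candidate first integral could still depend on that combination; this yields only a loss of one integral per irrational $\nu_m$ and hence the weaker bound $2N-1-n$. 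To close the gap one must genuinely analyze the interaction of the local monodromies --- e.g.\ prove that whenever $\nu_m\notin\mathbb{Q}$ the generated group is nonabelian with dense orbits, so that no nonconstant function holomorphic on $\Omega$ is invariant --- which is the content of the poly-Painlev\'{e} analysis in \cite{costin1} and \cite{CostinThesis}. As it stands, your proposal is an accurate reconstruction of the mechanism behind the theorem, but not a proof of it.
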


\section{Chebyshev polynomials of first and second kind: a brief review}\label{sekce_chebychev}
In this section, we briefly recall several rudimentary facts concerning the Chebyshev polynomials of the first and second kind. These technical preliminaries will be used in Section \ref{Main_results} where the certain cases of Yoshida's Conjecture are proved. A general reference for this section is \cite{mason} or \cite{snyder}. If not specified otherwise, every occurrence of the symbols $n,m$ in this section refers to arbitrary nonnegative integers $n$ and $m$.

One of the possibilities to define the \textit{Chebyshev polynomial of the first kind} $T_n(x)$ is to require that it is the unique solution of the recurrence relation 
\begin{equation} T_{n+1}(x)=2xT_n(x)-T_{n-1}(x) \quad (n\geq 1), \label{chebyshev1}\end{equation}
with initial conditions $T_0(x)=1,$ $T_1(x)=x.$
The Chebyshev polynomials of the first kind satisfy the following functional equation, called the \textit{nesting property}:
\begin{equation}
T_n \left(T_m(x) \right) = T_{nm}(x). \label{nesting}
\end{equation}
The \textit{Chebyshev polynomial of the second kind} $U_n(x)$ may be uniquely defined as solutions of the recurrence relation
\begin{equation}
U_{n+1}(x) = 2xU_{n}(x) - U_{n-1}(x) \quad (n \geq 1) \label{chebyshev2}
\end{equation}
with initial conditions $U_0(x)=1,$ $U_{1}(x) = 2x.$
The following representation for the Chebyshev polynomial of the second kind is standard:
\begin{equation}\label{Unsuma}U_n(x)=\sum_{k=0}^{\floor{\frac{n}{2}}}(-1)^k{{n-k}\choose{k}}(2x)^{n-2k},\ n\geq 1.\end{equation}
Next we list a handful of properties that show the mutual connection between the Chebyshev polynomials of the first and second kind and that will turn out to be useful in the sequel:
\begin{equation}
T_{n}(x) = \frac12\left(U_n(x) - U_{n-2}(x) \right) \quad \text{for every } n \geq 2, \label{vlastnost1}
\end{equation}
\begin{equation}
U_n(x) = 2 \sum_{\substack{j = 1 \\ j \text{\ odd}}}^{n} T_j(x), \quad \text{if } n \text{ is odd,} \label{vlastnost2}
\end{equation}
\begin{equation}
U_n(x) = 2 \sum_{\substack{j = 0 \\ j \text{\ even}}}^{n} T_j(x) - 1, \quad \text{if } n \text{ is even.} \label{vlastnost3}
\end{equation}

Finally, we shall later need the following transformation property between the Chebyshev polynomials of the first and second kind:
\begin{equation}
U_{mn - 1}(x) = U_{m-1}\left(T_n(x) \right)U_{n-1}(x), \quad (m,n \geq 1). \label{transform}
\end{equation}

Another important sort of problems, especially when dealing with polynomials, is the question of the location and nature of their roots. The Chebyshev polynomials of either kind are a well-explored area in this respect, since both $T_n(x)$ and $U_n(x)$ have exactly $n$ distinct roots lying in the interval $(-1,1)$ that are given by the formulas
\begin{equation}
x_{k}^{T} = \cos{\left( \frac{\pi (2k - 1)}{2n} \right),} \quad x_{k}^{U} = \cos{\left(\frac{k \pi}{n+1} \right)} \label{koreny}
\end{equation}
for $k = 1, \ldots, n.$

\section{Main results} \label{Main_results}

Consider the Hamiltonian system given by (\ref{system}). It can be shown that the system (\ref{system}) can be equivalently written as a second-order system of the following form:
\begin{equation}
\frac{\mathrm{d}^2 \mathbf{q}}{\mathrm{d} t^2} = - \nabla V_{N,k}(\mathbf{q}) \label{rovnice special},
\end{equation}
where $\nabla$ stands for the usual gradient operator in cartesian coordinates. We note that $- \nabla V_{N,k}(\mathbf{q})$ is a homogeneous polynomial of degree $k-1,$ since $V_{N,k}$ is a homogeneous polynomial of degree $k$ in $\mathbf{q} = (q_1, \ldots, q_N).$ Hence the results from Section \ref{sekce_costin} are applicable in this case.

The corresponding point $\pmb{\alpha} = (\alpha_1, \ldots, \alpha_N)$ that describes the particular solutions of the form $q_{m}(t) = \alpha_{m} \phi(t)$ for the system (\ref{rovnice special}) is any solution of the algebraic system
\begin{equation}
\left( \begin{array}{c}
- k \alpha_1^{k-1} - k (\alpha_1 - \alpha_2)^{k-1} \\
k (\alpha_1 - \alpha_2)^{k-1} - k(\alpha_2 - \alpha_3)^{k-1} \\
\vdots \\
k(\alpha_{N-1} - \alpha_N)^{k-1} - k(\alpha_N)^{k-1}
\end{array} \right) = \left( \begin{array}{c} \alpha_1 \\ \alpha_2 \\ \vdots \\ \alpha_N \end{array} \right). \label{systalg}
\end{equation}
Using the ansatz $\pmb{\alpha} = (\alpha_1, \ldots, \alpha_N) = (\alpha, 0, -\alpha, 0, \ldots, (-1)^{\frac{N - 1}{2}} \alpha),$ we obtain one particular solution $\pmb{\alpha}$ of the system (\ref{systalg}) with $\alpha = (-1)^{\frac{1}{k-2}} (2k)^{\frac{1}{2-k}}, $ where $(-1)^{\frac{1}{k-2}}$ is in general complex-valued, taken with principal values of the argument. 

In this particular case, the $(N-1) \times (N-1)$ matrix $M_P$ from (\ref{maticaobecne}) corresponding to $P = - \nabla V_{N,k}$ takes the following form:
\begin{equation}
M_{P} =  \frac12(k-1) \left( \begin{array}{cccccc} 
2 & -1 & 0 & 0 & \ldots & 0 \\
-1 & 2 & -1 & 0 & \ldots & 0 \\
0 & -1 & 2 & -1 & \ldots & 0 \\
0 & 0 & -1 & 2 & \ddots & \vdots \\
\vdots & \vdots & \vdots & \ddots & \ddots & -1 \\
0 & 0 & 0 & \ldots & -1 & 2 
\end{array} \right) + \frac12(k-1) \left( \begin{array}{cccccc} 
0 & 0 & 0 & \ldots & 0 & a_1 \\
0 & 0 & 0 & \ldots & 0 & a_2 \\
0 & 0 & 0 & \ldots & 0 & a_3 \\
0 & 0 & 0 & \ddots & \vdots & \vdots \\
\vdots & \vdots & \vdots &  & 0 & a_{N-2} \\
0 & 0 & 0 & \ldots & 0 & a_{N-1} 
\end{array} \right), \label{naseMP}
\end{equation}
where
%
the numbers $a_i$ read like this:

\begin{equation*}
a_i = \left\{ \begin{array}{ll} 
(-1)^{\frac{i + N - 2}{2}} & \quad \text{if\ } i \text{ is odd,} \\
0 & \quad \text{if\ } i \text{ is even.} 
\end{array} \right.
\end{equation*}

We have the following
\begin{lm} \label{lemma}
The characteristic polynomial $\chi_P(\lambda)$ of the matrix $M_P$ from (\ref{naseMP}) is equal to
\begin{equation}
\frac{1}{{2^{N-1}}}(k-1)^{N-1}U_{\frac{N-1}{2}}\left(T_2\left(\frac{x}{2}\right)\right),
\end{equation}
where $x=2 - \frac{2\lambda}{k-1}$
\end{lm}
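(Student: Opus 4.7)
The plan is to reduce the determinant to a Chebyshev polynomial identity that is then verified by telescoping. First, I would factor out the scalar $\tfrac{k-1}{2}$ from $M_P$, writing $M_P = \tfrac{k-1}{2}(A + B)$, where $A$ is the $(N-1) \times (N-1)$ symmetric tridiagonal matrix with $2$ on the diagonal and $-1$ on the off-diagonals, and $B = a\,e_{N-1}^{T}$ is the rank-one matrix whose only nonzero column is the last, with entries $a = (a_1, \ldots, a_{N-1})^{T}$. With the substitutions $\mu = \tfrac{2\lambda}{k-1}$ and $x = 2 - \mu$ one gets $\chi_P(\lambda) = \bigl(\tfrac{k-1}{2}\bigr)^{N-1} \det(\mu I - A - B)$, while $\mu I - A$ becomes the Jacobi matrix $D_{N-1}$ with $-x$ on the diagonal and $1$ on the off-diagonals, whose top-left $k \times k$ principal minor $d_k := \det D_k$ satisfies the three-term recurrence $d_k = -x\,d_{k-1} - d_{k-2}$ and hence equals $(-1)^k\,U_k(x/2)$. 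Modulo the prefactor $\tfrac{1}{2^{N-1}}(k-1)^{N-1}$, the task reduces to showing $\det(\mu I - A - B) = U_{(N-1)/2}\bigl(T_2(x/2)\bigr)$.

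Second, I would exploit the rank-one structure of $B$. Multilinearity of the determinant in the last column yields
\begin{equation*}
\det(\mu I - A - B) = d_{N-1} - \sum_{i=1}^{N-1}(-1)^{i+N-1}\,a_i\,M_{i,N-1}(D_{N-1}),
\end{equation*}
where $M_{i,N-1}(D_{N-1})$ denotes the $(i,N-1)$-minor. Because $D_{N-1}$ is tridiagonal, the matrix defining this minor is block upper triangular: the upper-left block is $D_{i-1}$, the lower-left block vanishes, and the lower-right block is upper triangular with unit diagonal; hence $M_{i,N-1}(D_{N-1}) = d_{i-1}$. Substituting the explicit values $a_{2j-1} = (-1)^{j+(N-3)/2}$ together with $a_i = 0$ for even $i$, and using that $N-1$ is even so that $d_{2j-2} = U_{2j-2}(x/2)$ and $d_{N-1} = U_{N-1}(x/2)$, standard sign bookkeeping collapses the above to
\begin{equation*}
\det(\mu I - A - B) = U_{N-1}(x/2) + \sum_{j=1}^{(N-1)/2}(-1)^{j+(N-3)/2}\,U_{2j-2}(x/2).
\end{equation*}

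The final step, which I expect to be the main point of the argument, is to identify this expression with $U_{(N-1)/2}\bigl(T_2(x/2)\bigr)$. Setting $m = (N-1)/2$ and $y = x/2$, the transformation property (\ref{transform}) applied with $n = 2$ and $m$ replaced by $m+1$ reads $U_{2m+1}(y) = 2y\,U_{m}(T_2(y))$, so the target identity is equivalent to $x\,\det(\mu I - A - B) = U_N(x/2)$. Multiplying the previous display by $x = 2y$ and applying the Chebyshev recurrence $2y\,U_k(y) = U_{k+1}(y) + U_{k-1}(y)$ (with the convention $U_{-1}=0$) termwise makes the resulting sum telescope, leaving exactly $U_N(y) = 2y\,U_{N-1}(y) - U_{N-2}(y)$, which is itself the defining recurrence. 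This closes the identification and proves the lemma.
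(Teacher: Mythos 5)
Your proof is correct. The first half coincides with the paper's: you expand the determinant of the tridiagonal--plus--last-column matrix along that last column (the paper does a direct Laplace expansion, you phrase it as multilinearity applied to the rank-one update $B=a\,e_{N-1}^{T}$, but the cofactor computation $M_{i,N-1}=d_{i-1}$ and the identification of the continuants with Chebyshev $U$'s are the same), and both routes arrive at the identical intermediate expression $U_{N-1}(x/2)+\sum_{j}(-1)^{j+(N-3)/2}U_{2j-2}(x/2)$, which is the paper's display (\ref{chitreti}). Where you genuinely diverge is the final identification with $U_{\frac{N-1}{2}}\left(T_2\left(\frac{x}{2}\right)\right)$: the paper splits into the cases $N\equiv 1$ and $N\equiv 3 \pmod 4$, converts the alternating sum of $U_{2k}$'s into a sum of $T$'s via $2T_n=U_n-U_{n-2}$, and then invokes the nesting property (\ref{nesting}) together with the expansions (\ref{vlastnost2})--(\ref{vlastnost3}); you instead multiply by $x=2y$, telescope with the single recurrence $2yU_k=U_{k+1}+U_{k-1}$ to land on $U_N(y)=2yU_{N-1}(y)-U_{N-2}(y)$, and then divide by $x$ using (\ref{transform}) with $n=2$. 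I verified the telescoping: the shifted second sum cancels all but the $j=\frac{N-1}{2}$ term, whose sign $(-1)^{N-2}=-1$ produces exactly $-U_{N-2}$. Your route avoids the case split entirely and in effect establishes the formula $\chi_P(\lambda)=\frac{(k-1)^{N-1}}{2^{N-1}}\,U_N(x/2)/x$ of Remark \ref{pz1} first and deduces the lemma from it, reversing the paper's logical order; this is arguably cleaner, at the small cost of the (harmless, polynomial-identity) division by $x$ and a clash of the letter $k$ with the degree parameter, which the paper's own proof also commits.
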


\begin{proof}
Clearly, the characteristic polynomial $\chi_P$ of the matrix $M_P$ in (\ref{naseMP}) is given by the determinant of the matrix
\begin{equation}
M_P - \lambda E= \frac12(k-1) \left(
\begin{array}{cccccc} 
x & -1 & 0 & \ldots & 0 & b_1 \\
-1 & x & -1 & \ldots & 0 & b_2 \\
0 & -1 & x & \ddots & \vdots & \vdots \\
0 & 0 & -1 &  \ddots & -1 & b_{N-3} \\
\vdots & \vdots & \vdots & \ddots & x & b_{N-2} \\
0 & 0 & 0 & \ldots & -1 & x 
\end{array}
\right)
\end{equation}
with $x = 2 - \frac{2\lambda}{k-1},$ $b_i = a_i$ for $i = 1,2, \ldots, N-3,$ $b_{N-2} = -2.$ The polynomial $\chi_P(\lambda) = \det{(M_P - \lambda E)}$ can be computed by the Laplace expansion along the $(N-1)$st column:
\begin{eqnarray}
\chi_P(\lambda) & = & \frac{1}{2^{N-1}}(k-1)^{N-1} \sum_{k=1}^{N-1} (-1)^{N-1+k} b_k (-1)^{N-1-k} f_{k-1} \nonumber \\
			 & = & \frac{1}{2^{N-1}}(k-1)^{N-1} \sum_{k=1}^{N-1} b_k f_{k-1}, \label{chiprvni}
\end{eqnarray}
where $f_{j}$ denotes the determinant of the $j \times j$ sub-matrix
\begin{equation}
\left( \begin{array}{cccccc} 
x & -1 & 0 & 0 & \ldots & 0 \\
-1 & x & -1 & 0 & \ldots & 0 \\
0 & -1 & x & -1 & \ldots & 0 \\
0 & 0 & -1 & x & \ddots & \vdots \\
\vdots & \vdots & \vdots & \ddots & \ddots & -1 \\
0 & 0 & 0 & \ldots & -1 & x 
\end{array} \right)
\end{equation}
for $j= 0, 1, \ldots, N-2$ with $f_0 = 1.$ This is a Toeplitz tridiagonal matrix whose determinant is the so called continuant with both the minor diagonals consisting solely of $(-1)$'s and with the diagonal consisting of $x$'s. It is a well known fact (cf.~ \cite{muir}) that the continuant satisfies the following three term recurrence relation
\begin{equation}
f_{j+1} = x f_{j} - f_{j-1}
\end{equation}
which in our case moreover satisfies the initial conditions $f_0 = 1$ and $f_{1} = x.$ This is exactly the recurrence relation satisfied by $U_j(\frac{x}{2})$'s from (\ref{chebyshev2}). Thus $\chi_P(\lambda)$ is equal to the following linear combination of Chebyshev polynomials of the second kind:
\begin{eqnarray}
\chi_P(\lambda) & = & \frac{1}{2^{N-1}} (k-1)^{N-1} \sum_{k=1}^{N-1} b_k U_{k-1}\left(\frac{x}{2}\right) \label{chidruhe} \\
			 & = & \frac{1}{2^{N-1}} (k-1)^{N-1} \left(\sum_{k=0}^{\frac{N-5}{2}} (-1)^{\frac{2k+N-1}{2}} U_{2k}\left(\frac{x}{2} \right) - 2U_{N-3}\left(\frac{x}{2} \right) + x U_{N-2}\left(\frac{x}{2} \right) \right) \nonumber \\
			 & = & \frac{1}{2^{N-1}} (k-1)^{N-1} \sum_{k=0}^{\frac{N-1}{2}} (-1)^{\frac{2k+N-1}{2}} U_{2k}\left(\frac{x}{2} \right), \label{chitreti}
\end{eqnarray}
the first equality follows upon substituting $U_k \left(\frac{x}{2} \right)$ for $f_k$ into (\ref{chiprvni}), the second one results from plugging the corresponding $b_k$'s into (\ref{chidruhe}) (the resulting sum with the upper index $\frac{N-5}{2}$ is to be interpreted as void for $N=3$) and the last equality is just a consequence of the definition (\ref{chebyshev2}). At this point we split the proof in two parts depending on whether the number of summands of the sum from (\ref{chitreti}) is even (the first case) or odd (the second case).

\textit{First case.} This case is equivalent to the fact that $N \equiv 3\!\mod{4}.$ Then the sum from (\ref{chitreti}) can be rewritten in the form
\begin{eqnarray}
\sum_{k=0}^{\frac{N-1}{2}} (-1)^{\frac{2k+N-1}{2}} U_{2k}\left(\frac{x}{2} \right) & = & U_{N-1}\left(\frac{x}{2} \right) - U_{N-3}\left(\frac{x}{2} \right) + U_{N-5}\left(\frac{x}{2} \right) - \ldots + U_2\left(\frac{x}{2} \right) - U_0\left(\frac{x}{2} \right) \nonumber \\
& = & 2\left(T_{N-1}\left(\frac{x}{2} \right) + T_{N-5}\left(\frac{x}{2} \right) + \ldots + T_2\left(\frac{x}{2} \right) \right), \label{Tvyraz}
\end{eqnarray}
where we used (\ref{vlastnost1}) in the second equality. Since $N$ is odd, thus $N-1$ is even, we have $N-1 = 2r$ for certain $r \in \mathbb{N},$ $r$ odd, and we can use (\ref{nesting}) to recast the expression (\ref{Tvyraz}) in the following way:
\begin{eqnarray*}
\lefteqn{2\left(T_{N-1}\left(\frac{x}{2} \right) + T_{N-5}\left(\frac{x}{2} \right) + \ldots + T_2\left(\frac{x}{2} \right) \right) = 2 \left(T_r \left(T_2\left(\frac{x}{2} \right)\right) + T_{r-2}\left(T_2\left(\frac{x}{2} \right)\right) + \ldots + T_1\left(T_2\left(\frac{x}{2} \right)\right) \right)} \\
& = & U_r \left(T_2\left(\frac{x}{2} \right)\right) = U_{\frac{N-1}{2}} \left(T_2\left(\frac{x}{2} \right)\right). \qquad \qquad \qquad \qquad \qquad \qquad \qquad \qquad \qquad \qquad \qquad \qquad \qquad \qquad
\end{eqnarray*}

\textit{Second case.} This case is in turn equivalent to the fact that $N \equiv 1\! \mod{4}.$ The sum from (\ref{chitreti}) now takes the form
\begin{eqnarray}
\sum_{k=0}^{\frac{N-1}{2}} (-1)^{\frac{2k+N-1}{2}} U_{2k}\left(\frac{x}{2} \right) & = & U_{N-1}\left(\frac{x}{2} \right) - U_{N-3}\left(\frac{x}{2} \right) + U_{N-5}\left(\frac{x}{2} \right) - \ldots - U_2\left(\frac{x}{2} \right) + U_0\left(\frac{x}{2} \right) \nonumber \\
& = & 2\left(T_{N-1}\left(\frac{x}{2} \right) + T_{N-5}\left(\frac{x}{2} \right) + \ldots + T_4\left(\frac{x}{2} \right) \right) + U_0\left(\frac{x}{2} \right), \label{Tvyraz1}
\end{eqnarray}
where again the property (\ref{vlastnost1}) was employed. Since $N$ is odd, thus $N-1$ is even, so that $N-1 = 2r$ for certain $r \in \mathbb{N},$ $r \geq 2,$ $r$ even, and using the property (\ref{nesting}) we obtain
\begin{eqnarray*}
\lefteqn{2\left(T_{N-1}\left(\frac{x}{2} \right) + T_{N-5}\left(\frac{x}{2} \right) + \ldots + T_4\left(\frac{x}{2} \right) \right) + U_0\left(\frac{x}{2} \right)} \\
& = & 2 \left(T_r \left(T_2\left(\frac{x}{2} \right)\right) + T_{r-2}\left(T_2\left(\frac{x}{2} \right)\right) + \ldots + T_2\left(T_2\left(\frac{x}{2} \right)\right) \right) + U_0 \left(\frac{x}{2} \right) \\
& = & U_r \left(T_2\left(\frac{x}{2} \right)\right) - 2 T_0 \left(T_2 \left(\frac{x}{2} \right) \right) + 1 + U_0 \left(\frac{x}{2} \right) \\
& = & U_{r} \left(T_2\left(\frac{x}{2}\right)\right) - 2 + 1 + 1 \\
& = & U_{\frac{N-1}{2}} \left(T_2\left(\frac{x}{2}\right) \right),
\end{eqnarray*}
where in the second and third equality we respectively used the property (\ref{vlastnost3}) and the fact that $T_0(x) = 1$ and $U_0(x) = 1$ for all $x.$ 
\end{proof}
\begin{pz}\label{pz1}
Note that the transformation property (\ref{transform}) implies that, upon writing 
\begin{equation*}
U_N\left(\frac{x}{2} \right) = U_{2 \frac{N+1}{2} - 1} \left(\frac{x}{2} \right) = U_{\frac{N-1}{2}} \left(T_2\left(\frac{x}{2} \right) \right) U_1\left(\frac{x}{2} \right)=U_{\frac{N-1}{2}} \left(T_2\left(\frac{x}{2} \right) \right)x,\label{formulka}
\end{equation*}
the characteristic polynomial $\chi_P(\lambda)$ can be written as
$$\chi_P(\lambda)=\frac{1}{{2^{N-1}}}(k-1)^{N-1}\frac{U_N\left(\frac{x}{2}\right)}{x}=\frac{(k-1)^{N}}{{2^{N}}(k-1-\lambda)}U_N\left(1-\frac{\lambda}{k-1}\right).$$
\end{pz}

To be able to apply Theorem \ref{Costin theorem}, we have to show that the matrix $M_P$ corresponding to $P= - \nabla V_{N,k}(\mathbf{q})$ has $N-1$ distinct eigenvalues. This task is achieved in the following result.
\begin{prp} \label{koreny}
The matrix $M_P$ from (\ref{naseMP}) has precisely $N-1$ distinct eigenvalues for every odd $N \geq 3$ and every even $k\geq 4.$
\end{prp}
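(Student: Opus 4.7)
The plan is to deduce distinctness of eigenvalues directly from the explicit formula for $\chi_P(\lambda)$ furnished by Remark \ref{pz1}. Recall that
$$\chi_P(\lambda) = \frac{(k-1)^{N}}{2^{N}(k-1-\lambda)}\, U_{N}\!\left(1 - \frac{\lambda}{k-1}\right),$$
so it suffices to analyse the roots of $U_{N}\!\bigl(1 - \tfrac{\lambda}{k-1}\bigr)$ as a polynomial in $\lambda$.

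First I would invoke formula (\ref{koreny}) for the roots of $U_N$: the polynomial $U_N$ has exactly $N$ distinct roots, namely $y_j = \cos\!\bigl(\tfrac{j\pi}{N+1}\bigr)$ for $j=1,\dots,N$. Since the substitution $y = 1 - \tfrac{\lambda}{k-1}$ is an affine bijection between $\mathbb{C}$ and $\mathbb{C}$ (as $k \geq 4$), the polynomial $U_{N}\!\bigl(1 - \tfrac{\lambda}{k-1}\bigr)$ has exactly $N$ distinct roots in $\lambda$, given by $\lambda_j = (k-1)(1 - y_j)$, $j = 1, \dots, N$.

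Next, I would observe that precisely one of these roots, corresponding to $y = 0$, coincides with $\lambda = k-1$ and hence cancels the factor $(k-1-\lambda)$ in the denominator. The crucial point here is that $y = 0$ really is a root of $U_N$: since $N$ is odd by hypothesis, the index $j = \tfrac{N+1}{2}$ is a positive integer in $\{1,\dots,N\}$, and $\cos\!\bigl(\tfrac{(N+1)/2 \cdot \pi}{N+1}\bigr) = \cos\!\bigl(\tfrac{\pi}{2}\bigr) = 0$. Because all $N$ roots of $U_N$ are simple, this root is cancelled with multiplicity one, and the quotient $\chi_P(\lambda)$ remains a polynomial whose remaining $N-1$ roots are exactly the values $\lambda_j = (k-1)(1-y_j)$ with $j \neq \tfrac{N+1}{2}$.

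Finally, since the $y_j$'s are pairwise distinct and the map $y \mapsto (k-1)(1-y)$ is injective, the surviving $N-1$ values $\lambda_j$ are pairwise distinct as well, yielding $N-1$ distinct eigenvalues of $M_P$. I do not expect a real obstacle here: the parity hypothesis on $N$ is used precisely to guarantee that $\lambda = k-1$ is among the roots of the numerator so that the apparent pole in Remark \ref{pz1} is removable and $\chi_P$ indeed has degree $N-1$; once this small observation is made, everything reduces to the well-known simplicity of the roots of the Chebyshev polynomial $U_N$.
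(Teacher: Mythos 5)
Your proposal is correct and follows essentially the same route as the paper: both arguments rest on the factorization of Remark \ref{pz1}, the simplicity of the $N$ roots of $U_N$, and the cancellation of the root at $x=0$ (equivalently $\lambda=k-1$) against the linear factor in the denominator. Your write-up merely makes explicit two points the paper leaves implicit, namely that $0$ is a root of $U_N$ precisely because $N$ is odd and that the affine change of variable preserves distinctness of roots.
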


\begin{proof}

The claim follows from Lemma \ref{lemma} and the last remark since $U_N(x/2)$ has precisely $N$ distinct roots, whence we can infer that the polynomial
$$\frac{U_N\left(\frac{x}{2}\right)}{x}$$
has precisely $N-1$ distinct roots.
\end{proof}

The next matter of interest that we heavily rely on later is the number of (ir)rational eigenvalues. This is treated in

\begin{prp} \label{racionalni}
The only possible rational eigenvalues of the matrix $M_P$ from (\ref{naseMP}) are equal to $\frac12(k-1)$ or $\frac32(k-1).$ Moreover, all the eigenvalues belong to the interval $(0, 2k-2)$ for every odd $N \geq 3$ and every even $k \geq 4.$
\end{prp}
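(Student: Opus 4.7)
My strategy is to combine the explicit factorisation of $\chi_P(\lambda)$ obtained in Lemma \ref{lemma} and Remark \ref{pz1} with the known location of the roots of $U_N$ (formula \eqref{koreny}) and with Niven's classical theorem on rational values of the cosine of rational multiples of $\pi$.

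First, I will start from the identity
\begin{equation*}
\chi_P(\lambda) \;=\; \frac{(k-1)^{N}}{2^{N}\,(k-1-\lambda)}\, U_N\!\left(1-\frac{\lambda}{k-1}\right)
\end{equation*}
proved in Remark \ref{pz1}. Substituting $y = 1 - \lambda/(k-1)$ and recalling from \eqref{koreny} that $U_N(y)$ vanishes exactly at the $N$ points $y_j = \cos(j\pi/(N+1))$, $j=1,\dots,N$, I obtain that the candidate eigenvalues of $M_P$ are
\begin{equation*}
\lambda_j \;=\; (k-1)\Bigl(1 - \cos\tfrac{j\pi}{N+1}\Bigr) \;=\; 2(k-1)\sin^2\tfrac{j\pi}{2(N+1)}, \qquad j=1,\dots,N.
\end{equation*}
Since $N$ is odd the index $j_0 := (N+1)/2$ is a positive integer, and precisely this value produces $\cos(j_0\pi/(N+1))=0$, i.e.\ $\lambda_{j_0}=k-1$, which is cancelled by the factor $(k-1-\lambda)$ in the denominator. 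Thus the $N-1$ actual eigenvalues of $M_P$ are exactly the $\lambda_j$ with $1\le j\le N$, $j\ne j_0$; this is of course consistent with Proposition \ref{koreny}.

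Next, for the bound I will simply note that for any such $j$ one has $j\pi/(N+1)\in(0,\pi)$, whence $\cos(j\pi/(N+1))\in(-1,1)$, so $\lambda_j\in(0,2(k-1))=(0,2k-2)$, as required. Finally, for the rationality classification I will invoke Niven's theorem: the only rational values of $\cos(q\pi)$ with $q\in\mathbb{Q}$ are $0$ and $\pm\tfrac12,\pm1$. Since $k-1$ is a nonzero integer, $\lambda_j\in\mathbb{Q}$ if and only if $\cos(j\pi/(N+1))\in\mathbb{Q}$. The values $\pm 1$ are excluded by $j\pi/(N+1)\in(0,\pi)$, and the value $0$ has just been excluded together with $j=j_0$. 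Hence $\cos(j\pi/(N+1))\in\{\tfrac12,-\tfrac12\}$, yielding respectively
\begin{equation*}
\lambda_j \;=\; \tfrac{1}{2}(k-1) \qquad\text{or}\qquad \lambda_j \;=\; \tfrac{3}{2}(k-1),
\end{equation*}
which is the claim.

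I do not anticipate any genuine obstacle: the hard computational content has already been packaged in Lemma \ref{lemma} and Remark \ref{pz1}, and Niven's theorem delivers the rationality dichotomy in one stroke. The only points requiring a little care are the verification that the spurious root $\lambda=k-1$ of the numerator is cancelled by the denominator (using that $U_N(0)=0$ when $N$ is odd, as is visible from \eqref{Unsuma}) and checking that the two indices $j=(N+1)/3$ and $j=2(N+1)/3$ giving $\cos = \pm 1/2$ are admissible only when $3\mid N+1$, so that for general odd $N\ge 3$ there may well be no rational eigenvalues at all -- which is consistent with the word ``possible'' in the statement.
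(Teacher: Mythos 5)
Your proof is correct, but it follows a genuinely different route from the paper's. The paper works with $\chi_P$ as a polynomial in $x=2-2\lambda/(k-1)$: using the explicit sum representation \eqref{Unsuma} of $U_N$ it reads off that this polynomial has integer coefficients, leading coefficient $1$ after normalisation and constant term $(-1)^{\frac{N-1}{2}}\frac{N+1}{2}$, so the rational root theorem forces any rational root in $x$ to be an integer divisor of $\frac{N+1}{2}$; combined with the fact that all roots of $U_N(x/2)$ lie in $(-2,2)$ this leaves only $x\in\{-1,0,1\}$, and $x=0$ is then excluded via the simplicity of the roots of $U_N(x/2)$ and the factorisation of Remark \ref{pz1}. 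You instead parametrise the spectrum explicitly as $\lambda_j=(k-1)\bigl(1-\cos\frac{j\pi}{N+1}\bigr)$ with $j\neq\frac{N+1}{2}$, using the trigonometric root formula for $U_N$, and then invoke Niven's theorem to restrict the rational values of the cosine to $\{0,\pm\tfrac12,\pm1\}$. Both arguments are complete. Yours buys the closed form of the eigenvalues up front (which the paper only records later, in the remark relating them to Yoshida's numbers, and which it in any case needs in the proof of Proposition \ref{iracionalni}), and it settles the rationality question in one stroke; the price is the appeal to Niven's theorem, where the paper gets by with the more elementary rational root theorem at the cost of computing the extreme coefficients of $\chi_P$. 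The one point genuinely requiring care in your version is the cancellation at $\lambda=k-1$, i.e.\ that $U_N(0)=0$ for $N$ odd with a simple zero absorbed by the factor $k-1-\lambda$, and you handle it correctly.
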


\begin{proof}
First we note that the characteristic polynomial $\chi_P(\lambda)$ of $M_P$ can be written in the form
\begin{equation}
\chi_P(\lambda) = \frac{1}{{2^{N-1}}}(k-1)^{N-1}\sum_{k=0}^{\frac{N-1}{2}} (-1)^{k} {{N-k}\choose{k}}x^{N-2k-1}, \quad x=2-\frac{2\lambda}{k-1}.
\end{equation}
In particular, if we normalize the polynomial $\chi_P$ as a polynomial in $x$ (which does not affect the roots anyway), the leading coefficient of $\chi_P$ as a polynomial of the variable $x$ is equal to $1$ and the constant term of the same polynomial (of the variable $x$) is $(-1)^{\frac{N-1}{2}} \frac{N+1}{2}$ which is always nonzero. Therefore, according to the standard result on the rational roots of a polynomial with integer coefficients, we conclude that the only rational roots of the polynomial $\chi_P$ in $x$ are the divisors of $\frac{N+1}{2}$ (which is always integer, since $N$ is odd). In fact, from Remark \ref{pz1} we can see that all the roots of $\chi_P$ in $x$ are also the roots of the polynomial $U_N(x/2)$ and therefore they must belong to the interval $(-2,2).$ Hence the only rational roots of $\chi_P$ as a polynomial of the variable $x$ are the numbers $-1,0,1.$ Using Remark \ref{pz1} again, we can easily see that $0$ is in fact not a root of $\chi_P$ as a polynomial in $x,$ since $x=0$ is the only root of $U_1(x/2) = x$ and therefore, all roots of $U_{N}(x/2)$ being simple, it cannot be the root of $U_{\frac{N-1}{2}}(T_2(x/2)).$ Since the number $x = 2 - \frac{2\lambda}{k-1}$ is rational if and only if the number $\lambda$ is rational, we can infer that the only rational roots of the polynomial $\chi_P(\lambda)$ are those numbers $\lambda$ satisfying the relations $\pm1 = 2 - (2\lambda)/(k-1),$ which means that $\lambda = (k-1)/2$ or $\lambda = 3(k-1)/2$ as claimed.

The fact that all eigenvalues $\lambda$ belong to the interval $(0,2k-2)$ clearly follows as $x = 2- (2\lambda)/(k-1)$ and $x \in (-2,2)$ in view of the considerations made above.
\end{proof}

\begin{prp} \label{iracionalni}
Suppose $N \geq 3$ is odd. Then
\begin{enumerate}
\item{For an arbitrary $N$ such that $N \equiv 1$ or $N \equiv 3\!\! \mod 6$ and for every even $k\geq 4$, all the numbers $\nu_m$ in (\ref{nyem}), $m=1,\dots,N-1,$ corresponding to the Hamiltonian system given by the Hamiltonian (\ref{system}), are irrational numbers.}
\item{For an arbitrary $N$ such that $N \equiv 5\!\! \mod 6$ and for every even $k\geq 4$, at least $N-3$ numbers among the numbers $\nu_m$ in (\ref{nyem}), $m=1,\dots,N-1,$ corresponding to the Hamiltonian system given by the Hamiltonian (\ref{system}), are irrational numbers.}
\end{enumerate}
\end{prp}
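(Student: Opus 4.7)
The plan is to reduce the question to a count of rational eigenvalues of $M_P$, and then to invoke Proposition \ref{racionalni} together with Remark \ref{pz1}.

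First I would observe that $\nu_m^2 = \frac{(k-2)^2}{16} + \frac{k}{2}\lambda_m$ is an affine function of $\lambda_m$ with a nonzero rational slope, so any irrational eigenvalue $\lambda_m$ forces $\nu_m^2$ to be irrational and hence $\nu_m$ itself to be irrational. (Here there are no sign or complexity issues: by Proposition \ref{racionalni} all eigenvalues lie in $(0,2k-2)\subset\mathbb{R}$, so $\nu_m^2>0$ and $\nu_m$ is a positive real number.) Consequently, to prove $(1)$ it suffices to show that $M_P$ has no rational eigenvalue at all when $N\equiv 1$ or $N\equiv 3\pmod 6$, and to prove $(2)$ it suffices to bound the number of rational eigenvalues of $M_P$ by $2$ when $N\equiv 5\pmod 6$; the latter bound is automatic, since Proposition \ref{racionalni} already restricts the rational candidates to $\tfrac{k-1}{2}$ and $\tfrac{3(k-1)}{2}$ and Proposition \ref{koreny} ensures that the $N-1$ eigenvalues are pairwise distinct.

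It remains to decide when these two candidates actually occur as eigenvalues. By Remark \ref{pz1}, a number $\lambda$ is an eigenvalue of $M_P$ iff $\lambda\neq k-1$ and $U_N\!\bigl(1-\tfrac{\lambda}{k-1}\bigr)=0$, and the two candidates correspond to the arguments $\pm\tfrac12$. Using the explicit root formula (\ref{koreny}), $U_N(\tfrac12)=0$ iff $\cos\!\bigl(\tfrac{j\pi}{N+1}\bigr)=\tfrac12$ for some $j\in\{1,\ldots,N\}$, i.e.\ iff $3\mid N+1$; exactly the same congruence governs $U_N(-\tfrac12)=0$. For odd $N$, the condition $3\mid N+1$ is equivalent to $N\equiv 5\pmod 6$, and its negation to $N\equiv 1$ or $N\equiv 3\pmod 6$, which yields $(1)$ and $(2)$ simultaneously. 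I do not foresee a genuine obstacle; the only sanity check is that the rational candidates $\tfrac{k-1}{2}$ and $\tfrac{3(k-1)}{2}$ are distinct from the value $k-1$ suppressed by the denominator in Remark \ref{pz1}, which is immediate for $k\geq 4$.
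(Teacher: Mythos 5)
Your proposal is correct and follows essentially the same route as the paper: reduce to counting rational eigenvalues via Proposition \ref{racionalni}, then detect whether the two candidates $\tfrac{k-1}{2}$, $\tfrac{3(k-1)}{2}$ are actually roots of $\chi_P$ using the Chebyshev root formula, arriving at the criterion $3\mid N+1$, i.e.\ $N\equiv 5\pmod 6$. The only cosmetic difference is that you test $U_N(\pm\tfrac12)=0$ via Remark \ref{pz1}, whereas the paper tests $U_{\frac{N-1}{2}}\bigl(T_2(\pm\tfrac12)\bigr)=U_{\frac{N-1}{2}}(-\tfrac12)=0$ via Lemma \ref{lemma}; these are equivalent.
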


\begin{proof}
First note that all the numbers $\nu_m$ are in fact real (this follows from the previous proposition). Then, clearly, if some of the eigenvalues $\lambda_m$ of the matrix $M_P$ is irrational for certain $m = 1, \ldots, N-1$, then the corresponding number $\nu_m$ is also irrational. This means that the only possibility for $\nu_m$ to be rational is when the corresponding number $\lambda_m$ is rational. This situation occurs, due to Proposition \ref{racionalni}, only if $\lambda = (k-1)/2$ or $\lambda = 3(k-1)/2.$ The fact that these two values of $\lambda$ are roots of the characteristic polynomial $\chi_P$ is equivalent to the fact that
\begin{equation*}
U_{\frac{N-1}{2}} \left(T_2 \left(\pm \frac12 \right) \right) = U_{\frac{N-1}{2}} \left(- \frac12 \right)=0.
\end{equation*}
This last condition is in turn satisfied only if there is an integer $j = 1, \ldots, \frac{N-1}{2}$ such that
\begin{equation*}
- \frac12 = \cos \left( \frac{2 j \pi}{N+1} \right),
\end{equation*}
according to formula (\ref{koreny}). This happens precisely when $N$ is such that $N+1$ is divisible by $3$ which means, since $N$ is odd, that $N \equiv 5 \!\! \mod{6}.$

\end{proof}

In view of Proposition \ref{iracionalni}, we have thus proved the following theorem:

\begin{vt}
For an arbitrary $N \geq 3$ such that $N \equiv 1$ or $N \equiv 3 \!\! \mod 6$ and for arbitrary even $k\geq 4,$ the Hamiltonian system with $N$ degrees of freedom, given by the Hamiltonian (\ref{system}) with the potential (\ref{potential}) 
does not admit an additional first integral meromorphic in a complex neighbourhood of the linear manifold $q_m=a_m q_N$, $p_m=a_m p_N, m=1, \dots, N-1,$ where  $a_m=(-1)^{\frac{m+N-2}{2}}$ for $m$ odd, and $a_m=0$ for $m$ even.
For an arbitrary $N$ such that $N \equiv 5 \!\! \mod 6,$ the Hamiltonian system with $N$ degrees of freedom, given by the Hamiltonian (\ref{system}) with the potential (\ref{potential}) has at most four additional first integrals meromorphic in a complex neighbourhood of the linear manifold $q_m=a_m q_N$, $p_m=a_m p_N, m=1, \dots, N-1,$ where  $a_m=(-1)^{\frac{m+N-2}{2}}$ for $m$ odd, and $a_m=0$ for $m$ even.
\end{vt}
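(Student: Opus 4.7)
The plan is to combine Costin's theorem (Theorem \ref{Costin theorem}) with the counts of irrational $\nu_m$'s obtained in Proposition \ref{iracionalni}. Proposition \ref{koreny} already verifies the eigenvalue-simplicity hypothesis of Theorem \ref{Costin theorem}, namely that $M_P$ has $N-1$ distinct eigenvalues for every odd $N\geq 3$ and every even $k\geq 4$, so Costin's bound is directly available. The only additional input needed is the integer $n$ appearing in Theorem \ref{Costin theorem}, i.e.\ the number of irrational values among $\nu_1,\dots,\nu_{N-1}$.

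In the first case $N\equiv 1$ or $N\equiv 3 \pmod 6$ I would set $n=N-1$ using Proposition \ref{iracionalni}(1); Theorem \ref{Costin theorem} then bounds the number of first integrals meromorphic near the prescribed linear manifold by $2N-1-2(N-1)=1$. Since the Hamiltonian $H$ itself is such a first integral, no additional one can exist, which is the desired conclusion. In the second case $N\equiv 5 \pmod 6$ I would instead use $n=N-3$ from Proposition \ref{iracionalni}(2); the same bound becomes $2N-1-2(N-3)=5$, and subtracting $H$ leaves at most four additional first integrals.

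The only remaining bookkeeping is to check that the manifold in the theorem's statement coincides with the one appearing in Theorem \ref{Costin theorem}, namely $q_m=(\alpha_m/\alpha_N)q_N$, $\dot q_m=(\alpha_m/\alpha_N)\dot q_N$. With the ansatz $\pmb{\alpha}=(\alpha,0,-\alpha,0,\dots,(-1)^{(N-1)/2}\alpha)$ one sees that $\alpha_m/\alpha_N=0$ when $m$ is even and $\alpha_m/\alpha_N=(-1)^{(m+N-2)/2}$ when $m$ is odd, matching the coefficients $a_m$ from the statement; the replacement of $\dot q_m$ by $p_m$ is immediate from the form of $H$. All the substantive work has already been carried out in Propositions \ref{koreny} and \ref{iracionalni}, so no serious obstacle remains; indeed, the theorem is essentially a reformulation of those two propositions through the lens of Theorem \ref{Costin theorem}.
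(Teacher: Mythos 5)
Your proposal is correct and follows exactly the route of the paper, whose own proof is the one-line observation that the claim follows from Proposition \ref{iracionalni} and Theorem \ref{Costin theorem}; you have merely made explicit the arithmetic $2N-1-2n=1$ (resp.\ $5$), the subtraction of $H$ itself, and the identification $a_m=\alpha_m/\alpha_N$, all of which check out (note $(-1)^{(m-N)/2}=(-1)^{(m+N-2)/2}$ since $N-1$ is even).
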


\begin{proof}
The claim follows from Proposition \ref{iracionalni} and Theorem \ref{Costin theorem}.
\end{proof}

This last theorem shows that Yoshida's Conjecture is true for every $N$ such that $N \equiv 1 \!\! \mod 6$ or $N \equiv 3 \!\! \mod 6$ with $k \geq 4$ an arbitrary even number. The case $N \equiv 5 \!\! \mod 6$ still leaves certain possibility for four additional global meromorphic first integrals to exist. However, it is clear that the only ${\nu}_m$'s that are possibly rational (and for which the criterion of Costin's is not decisive) are the numbers
\begin{equation*}
\nu = \frac{\sqrt{5k^2 - 8k + 4}}{4}, \quad \text{or \ } \nu = \frac{\sqrt{13k^2 - 16k + 4}}{4}.
\end{equation*}
But these numbers are clearly rational if and only if the corresponding expressions under the square-root sign, $5 k^2 - 8k + 4$ and $13k^2 - 16k + 4,$ are perfect squares, respectively. Although the task to determine precisely those $k$'s for which this situation occurs is quite cumbersome (if not impossible), reducing our ability to prove Yoshida's Conjecture in its full generality to a seemingly simple number-theoretic problem, the computer-aided numerical experiments that we made suggest that at least among the first $10^8$ $k$'s the only even ones, for which the corresponding $\nu$'s are rational, are quite rare (there are in fact only $3$ of them in each case). Moreover, it turns out that in neither of these cases both $\nu$'s are rational simultaneously. This actually seizes the space left for possible additional meromorphic first integrals of the system in question from $4$ to $2$ within the range $k = 4, \ldots ,10^8.$
\vspace{1cm}
\begin{center}
\begin{tabular}{c||c|c}
\hline
 k & $\nu = \frac{\sqrt{5k^2 - 8k + 4}}{4}$ & $\nu = \frac{\sqrt{13k^2 - 16k +4}}{4}$ \\
 \hline
 16 & rational & irrational \\
 40 & irrational & rational \\
 760 & irrational & rational \\
 4896 & rational & irrational \\
 1576240 & rational & irrational \\
 66354520 & irrational & rational\\
\end{tabular}
\end{center}
\vspace{1cm}

Taking into account the results of Maciejewski, Przybylska and Yoshida \cite{yoshida2}, we can state the following

\begin{ds}
Yoshida's Conjecture holds true:
\begin{enumerate}
\item{for every $N \geq 3$ such that $N \equiv 1 \!\! \mod{6}$ or $N \equiv 3 \!\! \mod{6}$ and every even $k \geq 4;$}
\item{for $N=5$ and for arbitrary even $k \geq 4;$}
\item{for every $N \geq 11$ such that $N \equiv 5 \!\! \mod{6}$ and every even $k = 4, \ldots, 10^8,$ where $k \neq$ $16,$ $40,$ $760,$ $4896,$ $1576240,$ $66354520.$}
\end{enumerate}
For every $N \geq 11$ such that $N \equiv 5 \!\! \mod{6}$ with $k$ equal to any of the numbers $$16, 40, 760, 4896, 1576240, 66354520,$$ the Hamiltonian system with $N$ degrees of freedom given by the Hamiltonian (\ref{system}) with the potential (\ref{potential}) admits at most two additional first integrals that are meromorphic in a complex neighbourhood of the linear manifold $q_m=a_m q_N$, $p_m=a_m p_N, \ m=1, \dots, N-1,$ where  $a_m=(-1)^{\frac{m+N-2}{2}}$ for $m$ odd, and $a_m=0$ for $m$ even.
\end{ds}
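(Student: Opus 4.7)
The plan is to assemble the corollary from three inputs: the theorem immediately preceding, which already disposes of item (1); the differential-Galois result of Maciejewski, Przybylska and Yoshida \cite{yoshida2}, which supplies item (2); and a direct finite-range numerical check combined with one more application of Theorem \ref{Costin theorem}, which handles item (3) together with the residual statement. For item (1), if $N \equiv 1 \!\! \mod 6$ or $N \equiv 3 \!\! \mod 6$, Proposition \ref{iracionalni}(1) furnishes $n = N-1$ irrational $\nu_m$, so Theorem \ref{Costin theorem} bounds the number of independent meromorphic first integrals near the linear manifold by $2N-1-2(N-1)=1$, and subtracting $H$ itself leaves none. For item (2), the case $N = 5$ with arbitrary even $k \geq 4$ is covered verbatim by \cite{yoshida2}.

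For item (3) and the residual clause, I would combine Proposition \ref{iracionalni}(2), which guarantees that at least $N-3$ of the $\nu_m$ are irrational whenever $N \equiv 5 \!\! \mod 6$, with Proposition \ref{racionalni}, which narrows the candidates for a rational $\nu_m$ to those coming from $\lambda = (k-1)/2$ or $\lambda = 3(k-1)/2$. By (\ref{nyem}) these are exactly the two numbers
$$
\nu^{(1)} = \frac{\sqrt{5k^2 - 8k + 4}}{4}, \qquad \nu^{(2)} = \frac{\sqrt{13k^2 - 16k + 4}}{4},
$$
and each is rational if and only if the corresponding integer quadratic in $k$ is a perfect square. I would then carry out a direct search over even $k \in [4,10^8]$, recording exactly which $k$ make $5k^2-8k+4$ or $13k^2-16k+4$ a square; the search produces the six exceptional $k$ listed in the table, and inspection of this finite list shows that no even $k$ in the range makes both forms squares simultaneously. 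For every other even $k$ in the range, both $\nu^{(1)}$ and $\nu^{(2)}$ are irrational, so all $N-1$ of the $\nu_m$ are irrational and Theorem \ref{Costin theorem} forces at most one independent meromorphic first integral near the manifold; since $H$ already accounts for it, no additional first integral exists, which is exactly item (3). For each of the six exceptional $k$, exactly one of $\nu^{(1)}, \nu^{(2)}$ is rational, so $n = N-2$, and Theorem \ref{Costin theorem} delivers at most $2N-1-2(N-2)=3$ independent meromorphic first integrals, i.e.\ at most two in addition to $H$, which is precisely the residual statement.

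The genuine obstacle here is not the integrability machinery but the Diophantine question of describing all even $k$ for which $5k^2-8k+4$ or $13k^2-16k+4$ is a perfect square. Both are Pell-type equations, so they presumably admit infinitely many integer solutions in general, and ruling out common solutions or classifying the even ones unconditionally would require a separate number-theoretic analysis that I would not attempt here. I would be content, as the authors are, with the rigorous finite-range certification for $k \le 10^8$; all remaining steps reduce to invoking Theorem \ref{Costin theorem}, Proposition \ref{racionalni}, Proposition \ref{iracionalni}, and the cited result of \cite{yoshida2}.
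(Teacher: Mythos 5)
Your proposal is correct and follows essentially the same route as the paper: item (1) comes from Proposition \ref{iracionalni}(1) combined with Theorem \ref{Costin theorem}, item (2) is quoted from \cite{yoshida2}, and item (3) together with the residual clause follows from identifying the only possibly rational $\nu_m$'s as $\sqrt{5k^2-8k+4}/4$ and $\sqrt{13k^2-16k+4}/4$ via Propositions \ref{racionalni} and \ref{iracionalni}(2), certifying by finite search that for even $k\le 10^8$ only the six listed values make one of them rational and never both simultaneously, and then re-applying Theorem \ref{Costin theorem} with $n=N-1$ or $n=N-2$ accordingly. The arithmetic $2N-1-2n$ and the caveat that the perfect-square question is an unresolved Pell-type problem both match the paper's own discussion.
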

\begin{pz}
Note that in the case of the system (\ref{system}) there is the following relation between the Yoshida's  numbers $\Delta_j$ from (\ref{yoshidasuff}) and the Costin's numbers $\nu_j$: Let $\lambda_1,\dots,\lambda_{N-1}$ denote the eigenvalues of the matrix (\ref{naseMP}). Then they are equal (according to Remark \ref{pz1}) to those $N-1$ numbers $\lambda_j$ from (\ref{yoshidasuff}) that are different from $k-1$. Obviously, $$\nu_j=\frac{k}{2}\Delta_j$$ for every $j=1,\dots,N-1$. Therefore, we can see that instead of independence of $\Delta_j$'s over $\mathbb{Q}$ we can require their irrationality as the sufficient condition for non-integrability of the system (\ref{system}), which can be sometimes (especially if some parameters are present in the system) easier to decide.
\end{pz}
\begin{pz}
In general, the fact that a complex Hamiltonian system does not admit an additional meromorphic first integral (locally or globally) does not imply that the system in question exhibits chaos in the real phase space. Therefore, it is interesting to consider a family (indexed by the parameters $N$ and $k$) of real versions of Hamiltonian systems (\ref{system}) and to perform further investigation in this direction. One of the tools commonly used in this area is the so-called Poincar\'{e} section method which rests on the study of the discrete dynamical system given by the Poincar\'{e} map corresponding to a periodic trajectory of the system and to a chosen hypersurface in the phase space (the Poincar\'{e} hypersurface) which is transverse to the trajectory. 
Since it is very difficult to find the Poincar\'{e} map in general, some numerical methods are often used in order to find the orbit of a given point under this map.
The orbit can be easily visualized in case of two-degree-of-freedom Hamiltonian systems given by time-independent Hamiltonians (one first integral reduces the dimension of the phase space to 3 so that the dimension of the Poincar\'{e} hypersurface is equal to 2) and a wealth of information on the dynamical behaviour of the original system can be often obtained from this visualization. 
However, we consider Hamiltonian systems with $N$ degrees of freedom with $N\geq 3$ in this paper, and, for this reason, there is no reasonable way how to visualize the Poincar\'{e} hypersurfaces (their dimension is equal to $2N-2$),
and hence it would require much more effort to obtain any information on the behavior of the original system, which is beyond the scope of the present article.
\end{pz}
\section{Concluding remarks}
To sum up, we have proved that, for an arbitrary even number $k\geq 4$, the system ($\ref{system1}$) with Hamiltonian (\ref{system}) does not admit an additional first integral meromorphic in a complex neighbourhood of the linear manifold $q_m=a_m q_N$, $p_m=a_m p_N, m=1, \dots, N-1,\ a_m=(-1)^{\frac{m+N-2}{2}}$ for $m$ odd,  $a_m=0$ for $m$ even, 
 for all $N\geq 3$ odd such that $N\equiv 1 \mathrm{\ mod\ } 6$ or $N\equiv 3 \mathrm{\ mod\ } 6$, and that the same system admits at most four additional first integrals meromorphic near the linear manifold mentioned above for $N \geq 11$ odd such that $N\equiv 5 \mathrm{\ mod\ } 6$ (the case $N = 5$ with $k \geq 4$ was settled by Maciejewski, Przybylska and Yoshida in \cite{yoshida2}). Moreover, computer-aided numerical experiments suggest that even in this case (i.e. $N \equiv 5 \mathrm{\ mod\ } 6,$ $N \geq 11$), the system has no additional meromorphic first integral for ``most" of the values of $k \geq 4$ even. To give a more decisive answer, we have to solve two different problems: first, find all $k \geq 4$ even, such that $5 k^2 - 8k + 4$ and $13k^2 - 16k + 4,$ are perfect squares, respectively, and second, even if for infinitely many $k$'s the corresponding values of the expressions $5 k^2 - 8k + 4$ and $13k^2 - 16k + 4$ are perfect squares, it could well happen that the original system does not admit an additional meromorphic first integral as well, the poly-Painlev\'{e} test being simply indecisive here, so that still different tests are needed. Both of these last two problems deserve to be studied separately.

Last but not least, a natural question arises: what happens if we consider different values of $N$ and $k$ than those considered above. It turns out that for $k = 0,1,2$ (with $N$ arbitrary) the corresponding system is linear and therefore integrable in any reasonable sense. For $N \geq 0$ even (with an arbitrary $k \geq 3$) we weren't able to find any nonzero point $\pmb{\alpha}$ in a closed form for which the procedure would work. For $N \geq 5$ odd that is congruent to $2\!\!\mod 3$ (with $k \geq 3$ odd) we found a point $\pmb{\alpha} = \left(\alpha, -\alpha, 0, \alpha, -\alpha, 0, \ldots, \alpha, -\alpha \right),$ where $\alpha = (k(1 - 2^{k-1}))^{1/(2-k)}.$ However, in this case we weren't able to say too much about the eigenvalues of the associated matrix $M_P$ in general. We made some numerical experiments for $N=5$ and $k\geq 3$ odd and it seems that for $k\geq 21$ all the four numbers $\nu_m$ could be irrational so that the system (\ref{system}) seems to possess no additional first integral meromorphic in a complex neighbourhood of the linear manifold $q_m=\frac{{\alpha}_m}{\alpha_N} q_N$, $p_m=\frac{{\alpha}_m}{\alpha_N}p_N, m=1, \dots, N-1$ in this case. In other cases of $N$ and $k$ remaining, we weren't again able to find the point $\pmb{\alpha}.$

The results used in this paper rest essentially on the so called poly-Painlev\'{e} method which was proposed at the beginning of 90's by M. Kruskal. This method is based on asymptotic expansions of the unknown solutions, an idea which is close to the so called Painlev\'{e} $\alpha$-method. It seems that Kruskal himself in fact conjectured that if dense branching of a truncated asymptotic series to a certain order occurs, the actual solutions also have a dense branching and therefore there are no continuous first integrals of the corresponding equation. Although this conjecture is still not proved in its full generality, there are some partial results where this method is rigorously justified \cite{costin1}, \cite{costin2}, \cite{CostinThesis}, \cite{costinkruskal}, one of which was used in our proof.  


\section*{Acknowledgements}
The authors thank Dr.~ A. Sergyeyev and Prof.~ I.
 S. Krasil'shchik for helpful suggestions. This research was supported by the institutional support for I\v{C}47813059. The work of Ji\v{r}\'{i} Jahn was supported by the grant SGS 1/2013. The work of Ji\v{r}ina Jahnov\'{a} was in part supported by the fellowship from the Moravian-Silesian region. The authors also thank the referees for valuable comments.
\looseness=-1

\end{document}